\documentclass{llncs}
\usepackage{amsmath}
\usepackage{amsfonts}
\usepackage{amssymb}
\usepackage{graphicx}
\usepackage{fourier}
\usepackage[ruled,vlined]{algorithm2e}
\usepackage{float}
\makeatletter
\renewcommand\subsubsection{\@startsection{subsubsection}{3}{\z@}%
                       {-18\p@ \@plus -4\p@ \@minus -4\p@}%
                       {0.5em \@plus 0.22em \@minus 0.1em}%
                       {\normalfont\normalsize\bfseries\boldmath}}
\makeatother
\setcounter{secnumdepth}{3}
\title{Chauffeuring a Crashed Robot from a Disk}
\author{Debasish Pattanayak, H. Ramesh, and Partha Sarathi Mandal}
\institute{Indian Institute of Technology Guwahati, India
	\\\email{\{p.debasish,ramesh\_h,psm\}@iitg.ac.in}}
\begin{document}
\maketitle
\begin{abstract}
	Evacuation of robots from a disk has attained a lot of attention recently. We visit the problem from the perspective of fault-tolerance. We consider two robots trying to evacuate from a disk via a single hidden exit on the perimeter of the disk. The robots communicate wirelessly. The robots are susceptible to crash faults after which they stop moving and communicating. We design the algorithms for tolerating one fault. The objective is to minimize the worst-case time required to evacuate both the robots from the disk.
	When the non-faulty robot chauffeurs the crashed robot, it takes $\alpha \geq 1$ amount of time to travel unit distance. With this, we also provide a lower bound for the evacuation time. 
	Further, we evaluate the worst-case of the algorithms for different values of $\alpha$ and the crash time.
\end{abstract}

\section{Introduction}
Searching has always been a classical problem and by extension, the search of a hidden object in a domain has piqued the interest. In particular, a variant of search problem introduced by Czyzowicz et al.~\cite{CzyzowiczGGKMP14} as the \textit{evacuation} problem, which tries to minimize the time required for the last searcher to reach the target. The searchers, in this case, are mobile robots which can move around in the domain. The domain considered can be a convex shape like a disk, a circle or a triangle~\cite{BrandtLLSW17,ChuangpishitGS18,ChuangpishitMNO17}, while the target is hidden on the boundary. Another variant of the problem considers the domain as lines and rays~\cite{BrandtFRW17} with faulty robots~\cite{CzyzowiczKKNO16,KupavskiiW18}. The objective here is to minimize the \textit{competitive ratio} between the time required for the robot to reach the exit and the distance from the exit. 

The recent literature has focused on the aspect of collective-collaborative search. Czyzowicz et al.~\cite{CzyzowiczGGKMP14} introduced two robots trying to search for an exit located on the perimeter of a unit disk. A robot can locate the exit only when it is at the exit. There are two models of communication between the robots, namely, \textit{wireless} and \textit{face-to-face}. In face-to-face, the robots can exchange messages if they are collocated at the same point at the same time. They showed that the evacuation time for two robots in the wireless model was $1+2\pi/3+\sqrt{3}\approx 4.826$, which was optimal. In the same paper, they achieved an upper bound of  $5.740$ and lower bound $5.199$ for two robots in the face-to-face model. In a subsequent paper, Czyzowicz et al.~\cite{CzyzowiczGKNOV15} improved the upper and lower bounds to $5.628$ and $5.255$, respectively. Later Brandt et al.~\cite{BrandtLLSW17}, further improved the upper bound to $5.625$.

Further Czyzowicz et al.~\cite{fun/CzyzowiczGKKKNO18} study priority evacuation of a particular robot from the disk, namely the Queen, while other servant robots search for the exit. This paper sets the upper and lower bounds for evacuation for one, two and three servants. Another paper by Czyzowicz et al.~\cite{sirocco/CzyzowiczGKKKNO18} presents the bounds for $n\geq 4$.

Another aspect of this evacuation problem deals with fault-tolerance. The two types of faults considered are \textit{crash fault} and \textit{byzantine fault}. The type of crash fault considered in~\cite{CzyzowiczGGKKRW17,CzyzowiczKKNO16} does not detect the target when it passes through or does not communicate when it finds the target. The robots with byzantine faults in~\cite{CzyzowiczGGKKRW17,CzyzowiczGKKNOS16} even lie about the position of the target. Czyzowicz et al. \cite{CzyzowiczGGKKRW17} focused on minimizing the evacuation time for the latest non-faulty robot. They achieved a lower bound of $5.188$ and upper bound $6.309$ for three robots out of which at most one is susceptible to crash fault with wireless communication.
We initiate the study on a type of crash fault, where the robot stops moving and sending messages altogether (unlike~\cite{CzyzowiczGGKKRW17}). Instead of abandoning the crashed robot, our objective is to chauffeur it to the exit in the least time. As a natural outcome, we consider the robot which chauffeurs the crashed robots incurs an extra cost. The chauffeur carrying the crashed robot travels at a fraction of its original speed.

\paragraph{\textbf{Our Contributions}:}
In this paper, we consider a variant of crash fault where the affected robot stops moving and communicating after it has crashed. We address the problem of evacuation for two robots out of which at most one can be faulty. 
We propose three evacuation algorithms in the wireless communication model where the non-faulty robot chauffeurs the crashed robot.
\begin{itemize}
	\item We present a lower bound for evacuation time.
	\item We rigorously analyze our algorithms to provide the worst-case evacuation time corresponding to the crash time $w$ at which a robot becomes faulty.
	\item We compare the performance of the algorithms given particular values of the crash time ($w$) and the chauffeuring cost ($\alpha$).
\end{itemize}

\section{Model and Preliminaries}\label{sec:model}
We consider the evacuation of two robots from a unit disk, i.e., a disk with radius one. Let $R_1$ and $R_2$ be the robots. The disk contains an exit located on its perimeter. Both robots have to evacuate the disk. Initially, the robots are situated at the center of the disk and start moving at the same time towards the perimeter of the disk. A robot can find an exit only when it reaches the position of that exit.

We follow a convention that $\widearc{AE}$ denotes the arc along the perimeter on the disk starting at $A$ and ending at $E$ moving in the counter-clockwise direction. Accordingly, $\widearc{EA}$ is the complement of the arc $\widearc{AE}$. 
We abuse the notation $\widearc{AE}$ to denote the length of the arc and $\overline{AE}$ to denote the length of the chord corresponding to the arc $\widearc{AE}$.
Note that, the length of a chord corresponding to an arc of length $\zeta$ is $2\sin(\zeta/2)$.

Both robots travel at a uniform speed of one unit distance per unit time. We consider that the robots are susceptible to \textit{crash faults}. A robot which has been crashed stops moving and communicating thenceforth. At most a single robot is faulty. Suppose, $R_1$ has crashed, $R_2$ can still chauffeur $R_1$ to the exit. A non-faulty robot chauffeuring the crashed robot travels at a speed $1/\alpha$ times the original speed. In other words, it takes $\alpha$ time to travel unit distance, where $\alpha \geq 1$.

The robots can communicate by sending messages wirelessly. The communication is reliable. For analytical purposes, we ignore the message propagation delay. The robots communicate frequently with each other. Without loss of generality, let $w$ be the time after which $R_1$ crashes. So $w-1$ is the distance travelled by the robot after reaching the perimeter at a speed of one unit distance per unit of time before it crashes at time $w$. We also follow a convention that $x$ is the distance of the exit in the counter-clockwise direction starting from the point where $R_1$ has reached the perimeter. So, $\widearc{AE} = x$ as shown in Fig.~\ref{fig:MoveTogether}. We assume that $O$ is the origin and $\overline{OA}$ is the positive $x$-axis.

\paragraph{\textbf{Evacuation Problem (2,1)-crash fault}:} The objective is to minimize the time required by the latest robot to evacuate from the unit disk via an exit located on the perimeter of the disk starting from the center of the disk, where both robots travel at uniform speed of one unit distance per unit time and at most one robot is faulty out of the two. Chauffeuring the faulty robot increases the time for movement by a factor $\alpha \geq 1$ for the non-faulty robot.

\section{Lower Bounds for Wireless Communication}\label{sec:lowerbound}
The lower bound for wireless communication model without faults is $ 1 + 2\pi/3 + \sqrt{3} \approx 4.826$~\cite{CzyzowiczGGKMP14}. Hence, the lower bound is applicable to the crash fault model if the crash time $w \geq 1 + 2\pi/3 + \sqrt{3}$. 
We have the following theorems if the robot crashes before evacuation.

\begin{theorem}\label{thm:wlessthan1}
	The lower bound for evacuation with crash fault for crash time $ w < 1$ is $2\pi + w + \alpha(1-w)$.
\end{theorem}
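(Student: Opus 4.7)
I would like to prove the bound by decomposing the evacuation time of the non-faulty robot (the chauffeur) and applying a small linear-programming argument. Let $T_{\text{free}}$ be the chauffeur's total time moving at its nominal speed $1$, and $T_{\text{carry}}$ its total time moving at the carrying speed $1/\alpha$; by definition $T = T_{\text{free}} + T_{\text{carry}}$, while the total arc length of its trajectory equals $T_{\text{free}} + T_{\text{carry}}/\alpha$.

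My plan is to establish three lower-bound inequalities on these quantities. First, $T_{\text{free}} \ge w$, because $R_1$ is still functional during $[0,w]$ so no chauffeuring can happen before the crash. Second, $T_{\text{carry}}/\alpha \ge 1-w$: from time $w$ on, $R_1$ sits at $P_1$ with $|OP_1|=w$, so its net displacement to any exit on the perimeter is at least $1-w$, and this displacement equals the sum of the chauffeur's carry-segment lengths regardless of how many pick-ups and drop-offs occur. Third, $T_{\text{free}} + T_{\text{carry}}/\alpha \ge 2\pi + 1$, by the standard adversarial argument: if the chauffeur's trajectory missed a perimeter point the adversary would place the hidden exit there, so its image must contain the unit circle; since the curve starts at $O$ which lies at distance $1$ from the perimeter, its total arc length is at least $1 + 2\pi$.

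To finish I would minimize $T_{\text{free}} + T_{\text{carry}}$ subject to these three constraints. Since $\alpha \ge 1$, the variable $T_{\text{carry}}$ has coefficient $1$ in the objective but only $1/\alpha \le 1$ in the coverage constraint, so it is always at least as good to push $T_{\text{carry}}$ down to its lower bound and compensate in $T_{\text{free}}$. Thus the LP optimum lies at the vertex where the second and third inequalities are tight: $T_{\text{carry}} = \alpha(1-w)$ and $T_{\text{free}} = (2\pi + 1) - (1-w) = 2\pi + w$, which satisfies $T_{\text{free}} \ge w$ automatically (since $w<1<2\pi+w$). Substituting yields $T \ge 2\pi + w + \alpha(1-w)$, as claimed.

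The main obstacle I expect is formalising the coverage constraint robustly, given that the chauffeur may interleave search with one or more detours back to $P_1$ to pick $R_1$ up (and possibly put it down again). The key to avoiding case analysis is to phrase both non-trivial constraints purely metrically: constraint (3) as a statement about $R_1$'s net displacement being bounded above by its total carried distance, and constraint (2) as an arc-length lower bound on a single continuous curve starting at $O$ whose image contains the unit circle. Both statements are insensitive to the specific carrying schedule, so the LP argument closes uniformly and produces the stated bound.
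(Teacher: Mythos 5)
Your proof is correct and rests on the same two ingredients as the paper's own (informal) argument: the chauffeur's trajectory must cover the entire perimeter starting from the center, costing path length at least $2\pi+1$, and the crashed robot must be carried a distance of at least $1-w$ at a per-unit cost of $\alpha$. Your linear program over $T_{\text{free}}$ and $T_{\text{carry}}$ merely makes rigorous the overlap accounting that the paper asserts in one line --- indeed $T_{\text{free}}+T_{\text{carry}} = (T_{\text{free}}+T_{\text{carry}}/\alpha) + (1-1/\alpha)T_{\text{carry}} \geq (2\pi+1) + (\alpha-1)(1-w) = 2\pi + w + \alpha(1-w)$ --- so it is the same approach, carried out more carefully.
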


\begin{proof}
	Consider the simple case where the robot crashes at the center of the disk immediately after it is activated.  As the adversary can always place the exit at a position which is still unexplored, it requires at least $2\pi$ time to search for the exit on the perimeter by the non-faulty robot. Additionally, $\alpha$ is the minimum time required for the faulty robot to be carried to the exit from the center. Since the non-faulty robot can also travel to the perimeter while carrying the faulty robot, it does not need additional time to reach the perimeter. Then the worst-case time for evacuation for the faulty robot is always greater than $2\pi + \alpha$.
	Similarly, if $0<w\leq 1$, we have the lower bound for evacuation at 
	\begin{equation}
		2\pi + w + \alpha(1-w)
	\end{equation}
	where the robots travel together for a distance $w$, and the non-faulty robot chauffeurs the faulty robot rest of the $1 -w$ distance to the perimeter.
\end{proof}

\begin{lemma}[Lemma 5 from \cite{CzyzowiczGGKMP14}]\label{lem:unexplored}
	Consider a perimeter of a disk whose subset of total length $u + \epsilon > 0$ has not been explored for some $\epsilon > 0$ and $\pi \geq u > 0$. Then there exist two unexplored boundary points between which the distance along the perimeter is at least $u$.
\end{lemma}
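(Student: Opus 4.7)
The plan is to argue by contradiction: assume that every pair of unexplored boundary points is at perimeter distance strictly less than $u$, and show this forces $|U| \leq u$, contradicting $|U| = u + \epsilon$. The central idea will be a ``folding'' pigeonhole that overlays the two halves of the unexplored set onto a single arc of length $u$.

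First I would fix any unexplored point $p$. By the contradiction hypothesis every other unexplored point lies within perimeter distance $u$ of $p$, so $U$ is contained in the open arc $B(p, u)$ of length $2u$ centered at $p$. Since $u \leq \pi$, this is a proper arc of the circle, and I can parameterize it by angular coordinates in $(-u, u)$ with $p$ at $0$.

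Next I would split $U$ into $U_+ = U \cap (0, u)$ and $U_- = U \cap (-u, 0)$ and shift the left half rightward by $u$, obtaining $U'_- = U_- + u \subseteq (0, u)$. Rotation invariance of arc length gives $|U_+| + |U'_-| = |U_+| + |U_-| = |U| = u + \epsilon$, and since both sets live in the arc $(0, u)$ of length $u$, the two-set inclusion-exclusion bound forces $|U_+ \cap U'_-| \geq \epsilon > 0$. Any $t$ in the overlap then satisfies $t \in U$ and $t - u \in U$, yielding a pair of unexplored points whose angular separation is exactly $u$.

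The main subtlety, and the step I would be most careful about, is ensuring that this angular separation of $u$ coincides with the perimeter (shorter-arc) distance rather than its complement $2\pi - u$. This is precisely where the hypothesis $u \leq \pi$ enters, since $\min(u, 2\pi - u) = u$ iff $u \leq \pi$. With that in hand, the pair $(t, t-u)$ sits at perimeter distance exactly $u$, contradicting the assumption and proving the lemma. The measure-zero pivot $\{p\}$ and the antipode of $p$ (which is excluded since $B(p, u)$ is open when $u \leq \pi$) are both harmless in the measure accounting.
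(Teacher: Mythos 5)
Your proof is correct, but it takes a genuinely different route from the paper. The paper does not actually prove the lemma: it gives only an intuitive two-case sketch (if the unexplored set is a single arc, look at its endpoints; if it splits into several arcs, take the two farthest endpoints) and defers the details to the cited reference \cite{CzyzowiczGGKMP14}. Your argument is instead a self-contained proof by contradiction via a measure-theoretic folding pigeonhole: assuming all pairs of unexplored points are at perimeter distance $<u$, you trap $U$ in the open arc $B(p,u)$ of length $2u$, translate the left half onto $(0,u)$, and use $|U_+|+|U'_-|=u+\epsilon>u=|(0,u)|$ to force an overlap point $t$ with both $t$ and $t-u$ unexplored at shorter-arc distance exactly $u$ (here $u\leq\pi$ is exactly what makes the angular gap $u$ equal the perimeter distance). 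What your approach buys is rigor and generality: it applies to any measurable unexplored set, not just finite unions of arcs, and it avoids the gap in the paper's single-arc heuristic (the endpoints of an arc of length $u+\epsilon$ are at shorter-arc distance $\min(u+\epsilon,\,2\pi-u-\epsilon)$, which need not be $\geq u$ without further argument when $u+\epsilon>\pi$). What the paper's sketch buys is only brevity and intuition; as a proof it is incomplete, whereas yours stands on its own.
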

Intuitively, the proof follows from this argument. If $u$ is the unexplored part of the perimeter and it is a continuous arc, then the endpoints of the arc are separated by at least the same distance along the perimeter.
Otherwise, if the unexplored part is separated into multiple arcs, then the distance between two farthest endpoints of the two arcs are at least separated by a distance which is greater than the unexplored part $u \leq \pi$. A detailed proof is available in~\cite{CzyzowiczGGKMP14}.

\begin{theorem}\label{thm:wgreaterthan1plus2pi3}
    The lower bounds for evacuation with crash fault for crash time $w > 1 + 2\pi/3$ are
    \begin{itemize}
        \item $1 + 2\pi/3 + \sqrt{3}$ for $w \geq 1 + 2\pi/3 + \sqrt{3}$
        \item $1 + 2\pi/3 + (\alpha+1)(1 + 2\pi/3 + \sqrt{3} -w)$ for $w \in [1 + 2\pi/3, 1 + 2\pi/3 + \sqrt{3}/2]$
        \item $1 + 2\pi/3 + \sqrt{3} + (\alpha-1)(1 + 2\pi/3 + \sqrt{3} - w)$ for $w \in [1 + 2\pi/3 + \sqrt{3}/2, 1 + 2\pi/3 + \sqrt{3}]$
    \end{itemize}
\end{theorem}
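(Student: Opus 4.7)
The plan is to handle the three crash-time ranges separately, combining an adversarial argument with Lemma~\ref{lem:unexplored} and the no-fault lower bound of \cite{CzyzowiczGGKMP14}. Let $T^{\star} := 1 + 2\pi/3 + \sqrt{3}$ and $s := w - 1 - 2\pi/3$. The first case ($w \ge T^{\star}$) is immediate: the adversary may simply refrain from triggering a crash before time $T^{\star}$, reducing the instance to the no-fault one, to which the $T^{\star}$ bound of \cite{CzyzowiczGGKMP14} applies verbatim.

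For the remaining two cases, the plan rests on two ingredients. After the crash at time $w$, the non-faulty robot must visit the crashed robot's position and then bring it to the exit, with the second leg slowed by the chauffering factor $\alpha$; hence the post-crash evacuation time is at least
\[
d(R_{nf}(w), R_f(w)) + \alpha\, d(R_f(w), E).
\]
Moreover, at time $w$ the two robots have together covered at most $2(w-1)$ of the perimeter, leaving an unexplored arc of length at least $2\pi/3 - 2s$; by Lemma~\ref{lem:unexplored}, two unexplored points at this arc separation exist. The adversary adaptively places $E$ in the unexplored region and crashes whichever robot maximizes the expression above.

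For Case 2 ($0 \le s \le \sqrt{3}/2$), I would argue that the adversary can simultaneously enforce $d(R_{nf}(w), R_f(w)) \ge \sqrt{3} - 2s$ and $d(R_f(w), E) \ge \sqrt{3} - s$. The intuition is that at time $1 + 2\pi/3$ the no-fault tight configuration puts the two robots on opposite ends of a chord of length $\sqrt{3}$ with the exit at one end; in the extra $s$ time units each robot can close at most $s$ of any such chord, leaving the adversary with precisely the margins claimed. Summing yields $w + (\sqrt{3}-2s) + \alpha(\sqrt{3}-s)$, which simplifies to $1 + 2\pi/3 + (\alpha+1)(\sqrt{3}-s)$. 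For Case 3 ($\sqrt{3}/2 \le s \le \sqrt{3}$) the value $\sqrt{3}-2s$ is non-positive and is dropped, but the chauffering term $\alpha(\sqrt{3}-s)$ still applies; the bound collapses to $w + \alpha(\sqrt{3}-s) = T^{\star} + (\alpha-1)(\sqrt{3}-s)$.

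The main obstacle will be proving the chord bounds $\sqrt{3}-2s$ and $\sqrt{3}-s$ against an arbitrary algorithm, since Lemma~\ref{lem:unexplored} only yields an arc separation of $2\pi/3 - 2s$ whose naive chord $2\sin(\pi/3-s)$ is strictly smaller than $\sqrt{3}-s$. I plan to close this gap with a case analysis on how the unexplored arc and the positions of $R_1(w), R_2(w)$ distribute themselves: any algorithmic saving on one of the two target distances is forced by the unexplored-arc constraint to enlarge the other, and the adversary extracts the bound by picking the exit placement and the crashed robot that together realize the larger of the two.
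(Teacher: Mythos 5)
Your first case matches the paper's argument. For the other two cases, however, there is a genuine gap at the step you yourself flag as ``the main obstacle,'' and I do not think it can be closed within the framework you set up. The claim that the adversary can simultaneously force $d(R_{nf}(w),R_f(w))\geq \sqrt{3}-2s$ and $d(R_f(w),E)\geq \sqrt{3}-s$ against an \emph{arbitrary} algorithm is false: an algorithm that keeps the two robots collocated at all times (e.g.\ Algorithm $\mathcal{A}_0$) has $d(R_1(w),R_2(w))=0$, and even the weighted sum $d(R_{nf},R_f)+\alpha\, d(R_f,E)$ need not reach $(\sqrt{3}-2s)+\alpha(\sqrt{3}-s)$ for moderate $\alpha$, since the faulty robot's distance to the nearest unexplored point is bounded by the diameter. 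Such algorithms are of course bad, but they pay through a much later exit-discovery time, and your accounting ``$w$ plus post-crash pickup cost'' contains no term that charges for discovery when the exit is still unfound at time $w$. So the proposed trade-off argument (``a saving on one distance enlarges the other'') cannot be made to work without re-introducing the discovery time $1+t$ into the bound, at which point you are essentially re-proving Theorem~\ref{thm:wgreaterthanlb} rather than this statement.

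The paper takes a different route that avoids this issue: it conditions on the tight configuration of the no-fault lower bound of \cite{CzyzowiczGGKMP14}, namely that at time $1+2\pi/3$ the exit has just been found by $R_2$ while $R_1$ is at chord distance $\sqrt{3}$ from it, so the crash at time $w>1+2\pi/3$ occurs \emph{after} discovery, while the robots traverse that chord and (optimally, to hedge against a crash) move toward each other to meet at its midpoint at time $1+2\pi/3+\sqrt{3}/2$. The two formulas then fall out of whether the crash precedes or follows the midpoint meeting: before it, $R_2$ backtracks $1+2\pi/3+\sqrt{3}/2-w$ past the midpoint and chauffeurs over $1+2\pi/3+\sqrt{3}-w$; after it, the robots are already together and only the residual distance is slowed by the factor $\alpha$. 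Your algebra reproduces exactly these expressions, but the quantities $\sqrt{3}-2s$ and $\sqrt{3}-s$ should be derived from this post-discovery chord geometry, not asserted as positional invariants at time $w$ for every algorithm.
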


\begin{proof}
    If $w > 1 + 2\pi/3 + \sqrt{3}$, then we claim that the robots evacuate before any of them fails. By the time $1 + t$, two robots can explore at most $2t$ on the perimeter. The unexplored part is $2\pi - 2t$. According to Lemma~\ref{lem:unexplored}, there exist two unexplored points such that the smallest arc between them is at least $2\pi - 2t$. If a robot is at one of the unexplored points, then the adversary can place the exit at the other unexplored point and the exit is found. The robot, which does not find the exit, can receive a message and travel to the exit along a straight line through the disk. The evacuation time in that case is $ 1 + t + 2\sin(t)$ over all possible values of $\pi/2 < t< \pi$. 
    This results in a worst-case at $ t = 2\pi/3$, which is $ 1 + 2\pi/3 + \sqrt{3} \approx 4.826$.
    Since, the crash time $w$ is more than the worst-case evacuation time, both robots evacuate before the fault occurs and the lower bound for evacuation is $1 + 2\pi/3 + \sqrt{3}$.

    If $w < 1 + 2\pi/3 + \sqrt{3}$, the robot can fail at any moment before the evacuation occurs. For this part of the proof, consider that the exit is found at time $1 + 2\pi/3$ by $R_2$.
    In the worst-case $R_1$ travels a distance $\sqrt{3}$ from its own position towards the exit in a straight line. The distance between the two robots is $\sqrt{3}$, so the robots can meet at the midpoint of that chord at time $1 + 2\pi/3 + \sqrt{3}/2$. 
    If $R_1$ fails before the midpoint, then $R_2$ has to travel extra distance to chauffeur $R_1$. Otherwise, they meet at the midpoint and travel together. In that case, if $R_1$ fails, $R_2$ chauffeurs the remaining distance. The following two cases present the evacuation time if the robot fails before or after the midpoint.
            
    \begin{description}
        \item[Fault before the midpoint ($1 + 2\pi/3< w < 1 + 2\pi/3 + \sqrt{3}/2$):] $R_2$ has to travel extra distance of $1 + 2\pi/3 + \sqrt{3}/2 -w $ to reach $R_1$ from midpoint and chauffeur it for a distance $1 + 2\pi/3 + \sqrt{3} - w$. The evacuation time is $1 + 2\pi/3 + (\alpha+1)(1 + 2\pi/3 + \sqrt{3} -w)$.
        \item[Fault after the midpoint($1 + 2\pi/3 + \sqrt{3}/2 < w < 1 + 2\pi/3 + \sqrt{3}$):] $R_1$ and $R_2$ have already met and are travelling together. The distance $R_2$ needs to chauffeur $R_1$ is $ 1 + 2\pi/3 + \sqrt{3} - w$. The chauffeuring distance adds the cost by a factor of $(\alpha -1)$ since the robots were already travelling towards the same destination. The evacuation time is $1 + 2\pi/3 + \sqrt{3} + (\alpha-1)(1 + 2\pi/3 + \sqrt{3} - w)$.
    \end{description}
\end{proof}
\begin{theorem}\label{thm:wgreaterthanlb}
	The lower bound for evacuation with crash fault for crash time $w \in [1, 1 + 2\pi/3]$ is
	$
	\max_{w-1\leq t \leq 2\pi - 2(w -1)}\big( 1+ t + 2(\alpha+1)\cos(t/4)\big)
	$
	where $1 + t$ is the time for finding the exit and $w -1$ is the part of the perimeter explored by the crashed robot.
\end{theorem}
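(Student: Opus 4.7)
The plan is to bound the evacuation time from below by considering an arbitrary algorithm, parameterizing by the time $1+t$ at which the non-faulty robot $R_2$ reaches the exit, and using an adversarial choice of exit location to force a large chauffeuring chord. The bound will then follow by taking the maximum over admissible values of $t$. I first restrict attention to $t\geq w-1$, since the theorem's regime requires the crash to precede the exit discovery (otherwise the situation reduces to the analysis of Theorem~\ref{thm:wgreaterthan1plus2pi3}). At time $1+t$, the crashed robot $R_1$ sits at a point $P$ on the perimeter at the end of an arc of length $w-1$ that it explored before crashing, while $R_2$ sits at the exit $E$ at the end of an arc of length at most $t$ that it explored on the perimeter; the total explored arc length is therefore at most $(w-1)+t$.

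The heart of the argument is the adversary's exit placement: I would choose $E$ so that the shorter arc $\widearc{PE}$ has length $\pi-t/2$ and the longer arc has length $\pi+t/2$. For this placement to be consistent with the exit remaining undiscovered until time $1+t$, the union of the two explored arcs must fit inside the longer arc, i.e., $(w-1)+t\leq \pi+t/2$, which rearranges to $t\leq 2\pi-2(w-1)$---exactly the upper limit appearing in the theorem. Under this configuration the chord is $|PE|=2\sin((\pi-t/2)/2)=2\cos(t/4)$. Once $R_2$ reaches $E$ at time $1+t$, it must traverse the chord to $R_1$ at unit speed (time $|PE|$) and then chauffeur $R_1$ back along the chord at speed $1/\alpha$ (time $\alpha|PE|$), incurring additional cost $2(\alpha+1)\cos(t/4)$. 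Summing, the evacuation time is at least $1+t+2(\alpha+1)\cos(t/4)$, and taking the maximum over $t\in[w-1,\,2\pi-2(w-1)]$ yields the stated bound.

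The main obstacle is to argue that the adversary can in fact impose this configuration against every algorithm, not only against algorithms whose exploration happens to concentrate on the arc facing away from $E$. I would handle this by invoking Lemma~\ref{lem:unexplored} together with the observation that $2\pi-(w-1)-t\geq \pi-t/2$ precisely when $t\leq 2\pi-2(w-1)$, so the unexplored portion is long enough to contain a point at arc distance at least $\pi-t/2$ from $P$ for some admissible $t$. For algorithms that evade this bound at one particular $t$, the adversary can either delay exit discovery to a larger admissible $t$ or reassign which of the two robots crashes so that the crash point $P$ lands on the unfavorable side; in either case the maximum over $t$ in the stated formula remains a valid worst-case lower bound.
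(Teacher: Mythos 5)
Your proposal follows essentially the same route as the paper's proof: parameterize by the exit-discovery time $1+t$, argue the adversary can force the exit to lie at arc distance $\pi - t/2$ from the crash position so that the pickup chord is $2\cos(t/4)$ and incurs the extra $(\alpha+1)$ factor, and maximize over $t \in [w-1,\, 2\pi-2(w-1)]$; the enforceability issue you flag is present, and left equally informal, in the paper's own argument. The only substantive difference is that the paper handles the crash-after-discovery regime as a separate Case~1 and shows it contributes exactly the $t=w-1$ endpoint of your maximum, rather than deferring it to Theorem~\ref{thm:wgreaterthan1plus2pi3} as you do.
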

\begin{proof}
The lower bounds we describe are irrespective of the points where the robots hit the perimeter starting from the center of the disk. 
There can be two cases depending on the relation between crash time $w$ and the time an exit is found $(1 + t)$ by one of the robots.
\textbf{Case 1:} If a robot crashes after the exit is found, i.e., $ w \geq 1 + t$, then the robot not near the exit position is already travelling towards the exit. So, the worst-case time of evacuation would happen if and only if the robot crashes at the moment the exit is found, i.e., $w = 1 + t$. 
The worst-case distance from the crash position to exit is the diameter. If the robot covers part of the perimeter around the antipodal position ($C'$) of the crashed robot position ($C$), then the maximum distance between the crashed position and exit is at most $\overline{CB} = \overline{AC} = 2\sin((2\pi - (w -1))/4)$ as shown in Fig.~\ref{fig:lb1plust}.
The worst-case evacuation time is $w + 2(\alpha + 1)\sin((2\pi - (w -1))/4)$. 
\vspace{1em}

\noindent\begin{minipage}{\linewidth}
	\begin{minipage}{0.5\linewidth}
	\textbf{Case 2:} Suppose the exit is found at time $1 + t > w$. By the time the faulty robot crashes at $w$, it has covered at most $w-1$ on the perimeter. Consider two points $A$ and $B$ as shown in Fig.~\ref{fig:lb1plust} which are at a distance $w -1 + \epsilon$ from the crashed position $C$ for some small value of $\epsilon > 0$. If the adversary places the exit in the $\widearc{BA}$, then it would take at least $2\pi - 2(w-1)$ to explore the arc. The distance from any point on the arc is greater than $2\sin((w-1)/2)$. The time required for evacuation is at least 
    $$
        1 + t + 2(\alpha+ 1)\sin((w-1)/2)
    $$
	
	\end{minipage}\hspace{0.05\linewidth}
	\begin{minipage}{0.45\linewidth}\vspace{-4em}
		\begin{figure}[H]
			\centering
			\includegraphics[height=0.7\linewidth]{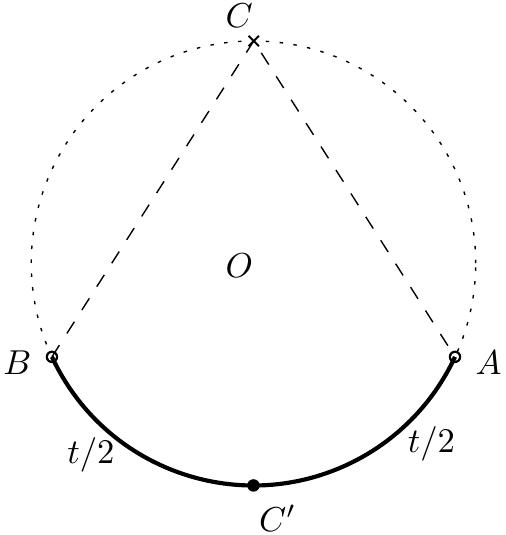}
			\caption{The exit lies in $\widearc{BA}$ with respect to crash position $C$}
			\label{fig:lb1plust}
		\end{figure}
	\end{minipage}
\end{minipage}\vspace{1em}
    For a value of $t < 2\pi - 2(w-1)$, this still holds. Then the time for evacuation is
    \begin{equation}\label{eq:lb1plust}
        1 + t  + 2(\alpha +1)\cos(t/4)
    \end{equation}
    The above expression holds as a lower bound since the robot which finds the exit at time $ 1+ t$ has to be at the exit and we show the linear distance between exit and crash position is at least $2\cos(t/4)$. As the robot has already crashed, the optimal path is to go to the crashed robot and pick up along the chord.
\end{proof}

\begin{remark}
    For $w -1 = 2\pi/3$, expression~\ref{eq:lb1plust} results in evacuation time $ 1 + 2\pi/3 + (\alpha+ 1)\sqrt{3}$ for $t = 2\pi/3$.
\end{remark}

\section{Upper Bound for Wireless Communication}\label{sec:upperbound}
The upper bound for evacuation in the wireless communication model is $1+ 2\pi/3+ \sqrt{3}$. This is the worst-case evacuation time of the algorithm proposed by Czyzowicz et al.~\cite{CzyzowiczGGKMP14}. In this paper, we also present evacuation strategies which determine the upper bound with chauffeuring. First, we present a simple strategy to put a ceiling on the upper bound. Next, we present two algorithms which provide us a tighter upper bound with respect to the crash time.

\subsection{Trivial Upper Bound (Algorithm $\mathcal{A}_0$(\texttt{MoveTogether}))}\label{sec:trivial}
Since at most one robot can be faulty, a trivial strategy is to just move both robots along the same path on the perimeter of the disk. Even if one of the robots becomes faulty, the other robot can chauffeur it and continue its search along the perimeter until it finds the exit. Then both the robots can evacuate via the exit as shown in Fig.~\ref{fig:MoveTogether}.
\begin{minipage}[h]{\linewidth}
	\begin{minipage}[h]{0.5\linewidth}
 The time required for this evacuation algorithm is at most $1 + 2\pi$ in the case where both robots are free of fault until the evacuation, where the time required to reach the perimeter is 1 and to search the perimeter is $2\pi$. If a robot becomes faulty after time $w$ from the activation, then the execution of the algorithm requires the following time
\begin{equation}\label{eq:trivialwc}
	z_0  = w + \alpha(1+2\pi -w)
\end{equation}
\end{minipage}\hspace{1em}
\begin{minipage}[h]{0.45\linewidth}
	\begin{figure}[H]
		\centering
		\includegraphics[width=0.7\linewidth]{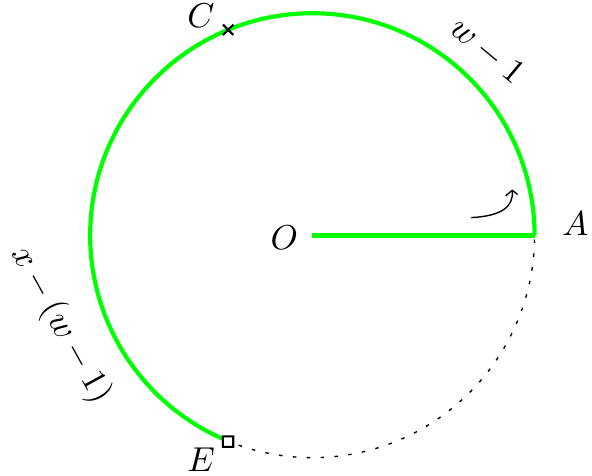}
		\caption{$R_1$ and $R_2$ start from $O$, hit the perimeter at $A$ and move together until they find exit at $E$ and evacuate.}
		\label{fig:MoveTogether}
		\vspace{1em}
	\end{figure}
\end{minipage}
\end{minipage}
where $z_0$ denotes the trivial upper bound with respect to the crash time $w$ and $\alpha \geq 1$.

\subsection{Evacuation Algorithm $\mathcal{A}_1$ (\texttt{MoveOpposite})}\label{sec:samepoint}
We start with a base algorithm which works with robots without faults. Both robots move together to an arbitrary point $A$ on the perimeter starting from the center $O$ and then move in opposite directions, i.e., clockwise and counter-clockwise as shown in Fig.~\ref{fig:carry}. Once a robot finds the exit, it sends a message to the other robot. On receiving the message, the other robot finds the position of the sender which is also the position of the exit by time of arrival of the message, knowledge of other robot's path and speed. Then it proceeds for evacuation along the straight line joining its current position and the exit.

As one of the robots can be faulty, it cannot communicate after it crashes.
To determine the position of the crashed robot, we assume that the robots communicate with each other constantly in very small intervals. If a robot crashes and then it fails to send a message, which determines the position of the crashed robot.
There can be two simple strategies in the aftermath of the crash. The non-faulty robot can carry the faulty robot and search together or the non-faulty robot searches for the exit without carrying the faulty robot. 
If both robots are travelling together, then they evacuate the moment the exit is found. Otherwise, the robot which finds the exit sends a message and both robots meet on the chord joining them. When the fault occurs, the non-faulty robot meets the faulty robot, then it chauffeurs the faulty robot to the exit. It increases the time required by a factor of $\alpha$. We describe the two strategies, where the exit is found after a robot has crashed in Section~\ref{sec:carry} and~\ref{sec:pickup}. An intermediate strategy, where the robot does not immediately pick up the robot after crash and searches for some distance on the perimeter, performs worse than the two aforementioned strategies. Please refer the appendix for more details.

\subsubsection{\texttt{SearchTogetherAfterCrash}}\label{sec:carry}
\begin{minipage}{\linewidth}	
\begin{minipage}[h]{0.6\linewidth}
	Two robots start together at the same time from the center $O$ as shown in Fig.~\ref{fig:carry}.
	Suppose the robot $R_1$ crashes at $C$ after a time $w$ at a distance $w-1$ along the arc from the point $A$, where the robots reach the perimeter of the disk, i.e., $\widearc{AC} = w-1$. $R_2$ is at $D$ when $R_1$ crashes at $C$. $R_2$ moves to $C$ along the chord $\overline{DC}$ and continues the search starting from $C$ on $\widearc{CD}$.
	Suppose the exit is located at $E$ and according to the convention, $A$ is the closest point in the clockwise direction from $E$ where a robot has reached the perimeter, so $\widearc{AE} = x$. By extension, $\widearc{CE} = x-w+1$.
	The time taken for the evacuation is $\overline{OA}+\widearc{DA}+\overline{DC}+\alpha\widearc{CE}$, i.e.,	
\end{minipage}\hspace{1em}
\begin{minipage}[h]{0.35\linewidth}\vspace{-3em}
	\begin{figure}[H]
	\centering
	\includegraphics[width= 0.9\linewidth]{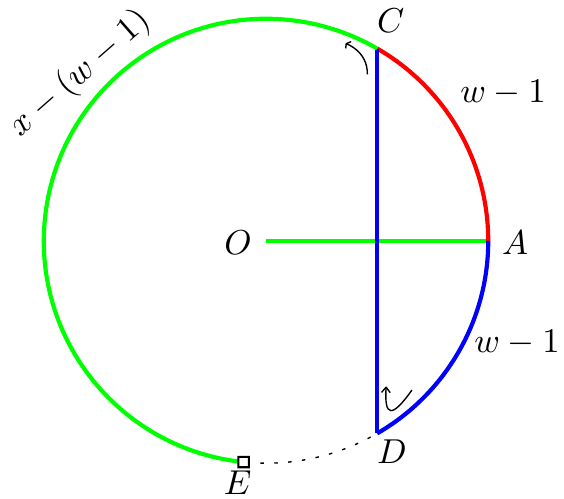}
	\caption{$R_1$ crashes at $C$ and then $R_2$ carries $R_1$ to continue searching for exit until $E$.}\label{fig:carry}\end{figure}
\end{minipage}
\end{minipage}
\begin{equation}
	w + 2\sin(w-1) + \alpha (x-w+1)
\end{equation}
The worst-case position of exit for which the evacuation time is maximum when $E$ is at an infinitesimally small distance from $D$ in the clockwise direction, i.e., $x = 2\pi - (w - 1)$. Then the worst-case evacuation time for this strategy is 
\begin{equation}
	z_{11} = w + 2\sin(w-1) + \alpha(2\pi-2(w-1))
\end{equation}
The worst-case evacuation time would be the maximum time over all possible values of $w$. For $w \in [1, 1+\pi]$, the critical point is obtained at $\partial z_{11}/\partial w = 0$, i.e.,
\begin{align}
	\partial z_{11} / \partial w = 1 - 2\alpha + 2\cos(w-1)  = 0
	\implies w = \arccos((2\alpha -1)/2) + 1
\end{align}
Since, $\partial^2 z_{11}/\partial w^2 = -2\sin(w-1) < 0 \,\forall w \in [1,1+\pi]$, the critical point is a local maximum. 
The worst-case evacuation time is obtained at $w -1 =\arccos(\alpha-1/2)$.
As the value of $\cos(w-1)$ ranges between $-1$ and $1$, the maximum value of $\alpha$ for which the critical point acts as the maximum is $(2\alpha -1)/2 = 1$, i.e., $\alpha = 1.5$.

\begin{remark}
	For $w\leq 1$, the \texttt{SearchTogetherAfterCrash} strategy has the same worst-case evacuation time as algorithm $\mathcal{A}_0$.
\end{remark}

\subsubsection{\texttt{SearchAloneAfterCrash}}\label{sec:pickup}

\begin{minipage}[h]{\linewidth}
	\begin{minipage}[h]{0.55\linewidth}
		Similarly, the robots $R_1$ and $R_2$ start at the center of the disk at the same time and move towards the perimeter. $R_1$ crashes at $C$. But $R_2$ continues to move along its own path until it finds the exit at $E$. Let $x$ be the distance along the arc to the exit from the point where the robots have hit the perimeter, i.e., $\widearc{AE}=x$. Then the distance from the crashed position of $R_1$ at $C$ to the exit at $E$ is $\widearc{CE} = x -(w -1)	$ as shown in Fig.~\ref{fig:pickup}. The time required for $R_2$ to reach the exit is $2\pi - x$. The time for evacuation would be,
	\end{minipage}\hspace{2ex}
	\begin{minipage}[h]{0.45\linewidth}\vspace{-4em}
		\begin{figure}[H]
			\centering
			\includegraphics[width=0.7\linewidth]{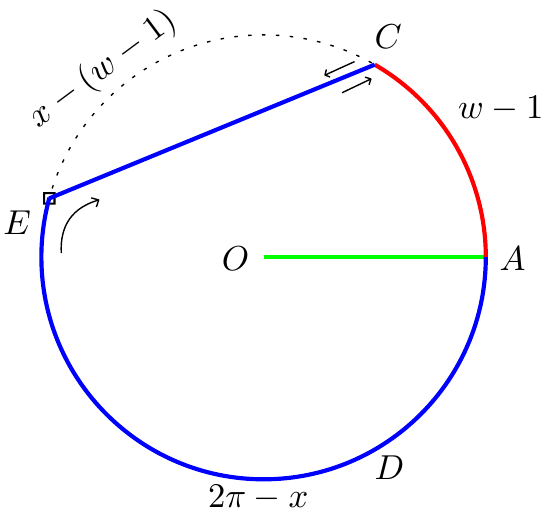}
			\caption{$R_1$ crashes at $C$, but $R_2$ continues searching for exit until $E$ and then picks up $R_1$ from $C$ and goes back to $E$.}\label{fig:pickup}\vspace{0.5em}
		\end{figure}
	\end{minipage}
\end{minipage}

\begin{equation}
	z_{12} = 1 + 2\pi - x +2(1+\alpha)\sin((x-(w -1))/2)
\end{equation}
Suppose the crash position of robot $R_1$ is fixed. Then the worst-case location of exit would be a critical point of $z_{12}$, i.e., $\partial z_{12}/\partial x = 0$.
\begin{align}
	(1+\alpha)\cos((x-(w-1))/2) = 1 \implies x = w -1 + 2\arccos(1/(1+\alpha))
\end{align}
Note that, $\partial z_{12} /\partial w < 0$, so $z_{12}$ is a monotonically decreasing function with respect to $w$.
Failure of a robot is not dependent on the position of the exit. 
So the variables $x$ and $w$ are independent. 
Hence, the worst-case evacuation time is obtained at $w = 1$. The worst-case evacuation time is $ 1 + 2\pi - 2\arccos(1/(1+\alpha)) + 2\sqrt{\alpha^2 + 2\alpha}$. Now let us determine the value of $\alpha$ for which the worst-case evacuation time of algorithm $\mathcal{A}_0$ exceeds the worst-case of \texttt{SearchAloneAfterCrash} strategy. 
The worst-case evacuation time of algorithm $\mathcal{A}_0$ from Equation~\ref{eq:trivialwc} is $ 1 + 2\pi\alpha$ for $w =1 $.
So,
\begin{align}
	&1 + 2\pi \alpha = 1 + 2\pi - 2\arccos(1/(1+\alpha)) + 2\sqrt{\alpha^2 + 2\alpha}\nonumber\\
	\implies&(1 + \alpha)\cos(\pi(1-\alpha)+\sqrt{\alpha^2 + 2\alpha}) -1 = 0 \label{eq:pickupalpha}
\end{align}
\sloppy The solution to Equation~\ref{eq:pickupalpha} is the value of $\alpha$ for which the worst-case of \texttt{SearchAloneAfterCrash} strategy coincides with algorithm $\mathcal{A}_0$. The corresponding $\alpha$ is 1.30346.
Hence, for values of $\alpha \leq 1.30346$, algorithm $\mathcal{A}_0$ has a better worst-case evacuation time compared to \texttt{SearchAloneAfterCrash} strategy. 

For the case where a robot becomes faulty before it reaches the perimeter, the non-faulty robot searches for the exit and then it picks up the faulty robot. Let $(w,0)$ be the position of faulty robot and $(\cos(x),\sin(x))$ be the position of the exit.
\begin{equation}
	z_{13} = 1 + x + (1+\alpha)\sqrt{(w-\cos(x))^2+\sin^2(x)}
\end{equation}	

\begin{remark}\label{rem:crashonline}
	If the exit is found before a robot crashes, then the worst-case would occur when it crashes on the perimeter.
\end{remark}
Since $R_1$ and $R_2$ start moving towards each other as soon as the exit is found and if $R_1$ finds the exit, then $R_2$ moves towards the exit moving along the line joining them. Hence, if $R_2$ crashes on the perimeter, that would result in the worst-case. Conversely, if $R_2$ finds the exit, it moves towards $R_1$ along the line joining them. $R_2$ would move at most to the midpoint of the line joining them. If $R_2$ becomes faulty at the midpoint of the line, then the evacuation time is less compared to when it failed at the perimeter.  

For $\alpha < 1.30346$, both strategies perform worse compared to the trivial algorithm at $w =0$. 
Both the strategies can be combined into one since the path of the robots remain the same until one of them crashes. 
Note that, the evacuation time has a local maximum at $ w =1 + \arccos((2\alpha -1)/2)$ for \sloppy\texttt{SearchTogetherAfterCrash} strategy for $\alpha < 1.5$, while it monotonically decreases for \texttt{SearchAloneAfterCrash} strategy. Since the evacuation time for $w=1$ is the same for both strategies at $\alpha = 1.30346$, the \texttt{SearchAloneAfterCrash} strategy performs better for all value of $w$.

Now, let us determine the value of $w$ for which both strategies perform the same, i.e., $z_{11} = z_{12}$. We have the following.
\begin{align}
	& w -1 + 2\sin(w-1) + \alpha (2\pi-2(w-1)) = 2\pi - 2\arccos(1/(1+\alpha)) - (w-1) + 2\sqrt{\alpha(\alpha+2)} \nonumber\\
	&\implies (w-1)(1-\alpha) + \sin(w-1) = \pi(1-\alpha) - \arccos(1/(1+\alpha)) +\sqrt{\alpha(\alpha+2)}\label{eq:combn}
\end{align}
For $\alpha < 1.30346$, we have the solution for $w$ from Equation~\ref{eq:combn}.
Let $\overline{w}$ be a solution to Equation~\ref{eq:combn}. If $w < \overline{w}$, then the other robot follows \texttt{SearchTogetherAfterCrash}, otherwise it follows \texttt{SearchAloneAfterCrash} strategy.
For $\alpha \geq 1.30346$, the \texttt{SearchAloneAfterCrash} strategy performs better than the  algorithm $\mathcal{A}_0$, where the worst-case of the \texttt{SearchTogetherAfterCrash} strategy is always greater than or equal to worst-case of algorithm $\mathcal{A}_0$. Hence, combination of the strategies will not yield a better result for $\alpha \geq 1.30346$.

\subsection{Evacuation Algorithm $\mathcal{A}_2$ (\texttt{MoveSameDirection})}\label{sec:zeta}

In this section, we describe an algorithm where the robots start from the center of the disk at the same time at an angle $\zeta$ with each other, where $0 \leq \zeta \leq 2\pi$. After reaching the perimeter, both robots start travelling in the counter-clockwise direction.
Without loss of generality, let us assume that $R_1$ crashes at time $w$.
We measure the angle $\zeta$ in the counter-clockwise direction from the faulty robot $R_1$.
\begin{figure}[h]
	\centering
	\includegraphics[width=0.7\linewidth]{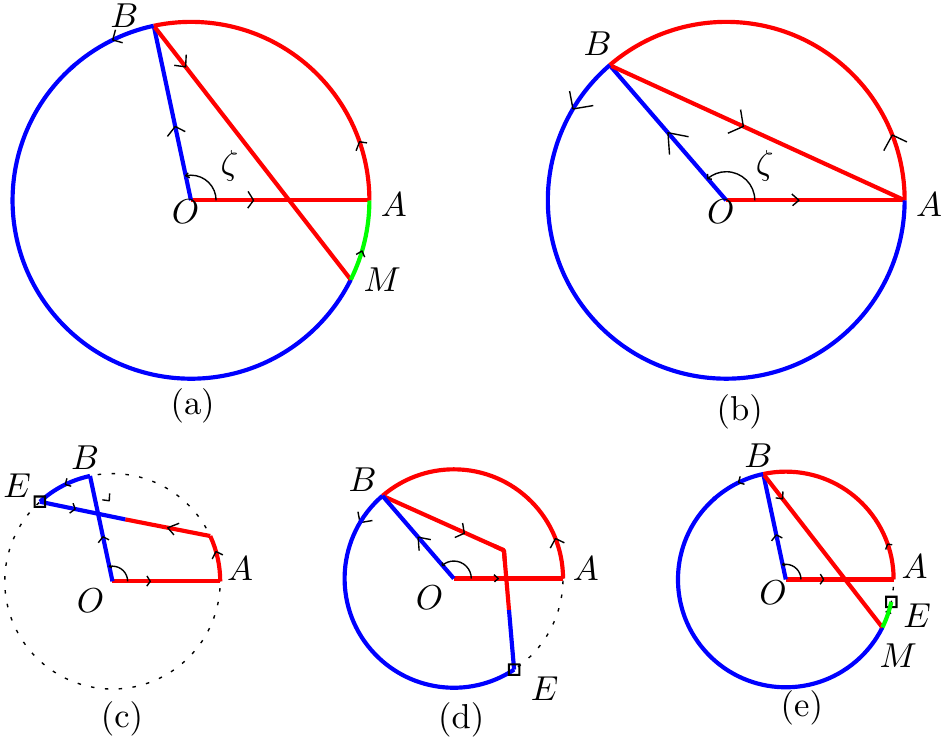}
	\caption{Path of the robots (red color for $R_1$ and blue for $R_2$) in Algorithm $\mathcal{A}_2$ without faults}
	\label{fig:zeta}
\end{figure}

The base strategy divides the perimeter into two arcs of length $\zeta$ and $2\pi - \zeta$. Each robot explores its own arc after reaching the perimeter from the center. For $\zeta < \pi$, $R_1$ would finish exploring its part of the arc. Then it will try to meet the other robot. Let us determine a meeting point $M$ (ref. Fig.~\ref{fig:zeta}(a)) such that $\widearc{AB} + \overline{BM} = \widearc{BM}$. The length, $m$, of $\overline{BM}$ can be determined from the following equation.
\begin{equation}
	\zeta + 2\sin((\zeta+m)/2) = \zeta + m
\end{equation} 
We can determine that the points $A$ and $M$ coincide for $\zeta = 2.24123$. So $R_1$ would move along $\overline{BM}$ for $\zeta < 2.24123$ (ref. Fig.~\ref{fig:zeta}(a)) and along $\overline{BA}$ for $\zeta \geq 2.24123$ (ref. Fig.~\ref{fig:zeta}(b)). For $\zeta > \pi$, a similar path is followed by $R_2$.

Now, we describe the action a robot takes if it finds an exit. On finding the exit, a robot messages the other robot. Then both robots start moving towards each other along the line joining them and meet at the midpoint (ref. Fig.~\ref{fig:zeta}(c) and (d)). Thenceforth both robots move towards the exit. If both robots are already travelling together, they evacuate via the exit (ref. Fig.~\ref{fig:zeta}(e)).

When $R_1$ crashes, it stops moving and stays there. Once $R_2$ has completely explored its own arc on the perimeter, then it moves to the position of $R_1$ and they explore the remaining part of the perimeter together if the exit is not found by that time.
Since we assume that $R_1$ is the robot which crashes, we claim that if the exit is found by $R_1$, then it would never result in the worst-case. The claim can be proved by the fact that if $R_2$ crashes, then the evacuation time would be higher. By symmetricity, there exists a situation where $R_1$ crashes corresponding to an exit found by $R_2$. So, in the following cases, we only explain the scenarios with $R_2$ finding the exit.

\begin{description}
	\item[Case 1:]	If $R_1$ crashes before the exit is found, then $R_2$ explores its own arc to search for the exit.
	\begin{itemize}
		\item If exit lies on $R_2$'s arc, then $R_2$ picks up $R_1$ from its crash position. The evacuation time is 
		\begin{equation}\label{eq:alphaplus1d}
			z_{21} = 1 + x - \zeta + (\alpha + 1)d
		\end{equation}
		where $d$ is the distance from the exit to the crash position and $x$ is the distance of the exit from the point where $R_1$ reaches the boundary in the counter-clockwise direction.
		The coordinates of exit position $E$ are $(\cos(x),\sin(x))$. The coordinates of the crash position $C$ are given by 
		\begin{align*}
			&(w, 0) &\textnormal{ for } w < 1 \\
			&(\cos(w -1), \sin(w - 1)) &\textnormal{ for } 1 \leq w \leq 1 + \zeta\\
			&(\lambda\cos(2\zeta+m)+(1- \lambda)\cos(\zeta), \lambda\sin(2\zeta+m)+(1 - \lambda)\sin(\zeta)) &\textnormal{ for } 1 + \zeta < w
		\end{align*}
		where $\lambda = \frac{w - \zeta -1}{m}$ if $\zeta < \pi$. Equation~\ref{eq:alphaplus1d} has two values of $x \in [\zeta, 2\pi]$, such that $\partial z_{21}/\partial x = 0$. The critical point may be a local maximum according to second derivative test. Hence the value of $x$ which provides the worst-case evacuation time for a given value of $w$ may be a critical point or a boundary point of domain of $x$ depending on the value of $\zeta$ and $\alpha$.
		\item If exit lies on $R_1$'s arc, $R_2$ picks up $R_1$ by travelling along a line joining the crash position and current position of $R_2$ and they explore the remaining part together. The evacuation time is 
		\begin{equation}
			z_{22} = w + d + \alpha(x - w - 1)
		\end{equation}
		where $d$ is the distance between crash position $(\cos(w-1), \sin(w-1))$ and position of $R_2$ at time $w$ determined similar to the previous case.
		In this case, the worst-case position of exit is just before the point where $R_2$ hits the perimeter, i.e., $x = \zeta - \epsilon$ for an $\epsilon \rightarrow 0$.
	\end{itemize}
	\item[Case 2:] If $R_1$ crashes after the exit is found, then $R_2$ is already moving towards $R_1$ along the line joining them. The evacuation time is 
	\begin{equation}
		z_{23} = w + (\alpha + 1)d
	\end{equation}
	where the distance $d$ between the crash position and position of $R_2$ at time $w$ determined similar to Case 1.
\end{description}
\section{Comparison between Algorithms}\label{sec:comparision}
We numerically evaluate the behavior of proposed algorithms. The worst-case evacuation time with respect to the crash time $w$ for specific values of $\alpha$ is evaluated.
$\{1, 1.30346,$ $ 1.5, 2\}$ are chosen as values of $\alpha$ since at these points behavior of the algorithms change according to the analysis in Section~\ref{sec:upperbound}.
The crash time $w$ is varied in $[0, 2\pi +1]$, because it takes $2\pi+1$ amount of time for a single non-faulty robot to evacuate from the disk. The evacuation times are evaluated at discrete values of $w$ with a gap of $\pi/120$. For each value of $w$, all possible worst-case positions of the exit are considered for Algorithms $\mathcal{A}_0$, $\mathcal{A}_1$ and $\mathcal{A}_2$. 
\noindent\begin{figure}[H]\centering
	\includegraphics[width=\linewidth]{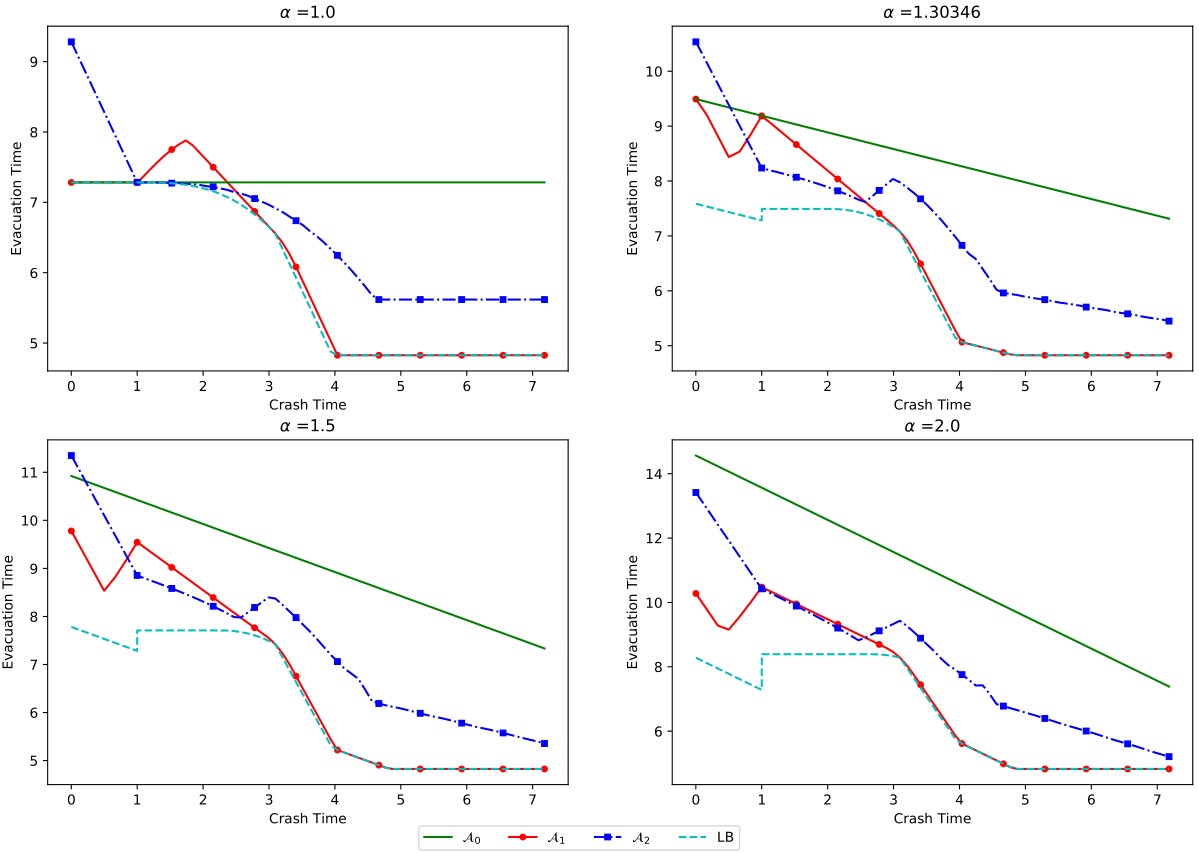}
	\caption{Comparison between algorithms and the lower bound for different values of $\alpha$}\label{fig:comparison}
\end{figure}\vspace{-2em}
As shown in Fig.~\ref{fig:comparison}, Algorithm~$\mathcal{A}_0$ performs better than $\mathcal{A}_1$ for $\alpha < 1.30346$.
For $w > 1 + 2\pi/2+\sqrt{3}/2$, Algorithm $\mathcal{A}_1$ is optimal. Observe that, $\mathcal{A}_2$ performs better compared to $\mathcal{A}_1$ for $w \in [1, 1+\pi/2]$.
The lower bound is dominated by $ t= 4\arcsin(2/(\alpha+1))$ for $\alpha = 2$ for crash time $ w < 1 + 2\pi/3$.
Also, for $\mathcal{A}_2$, the smallest worst-case evacuation time is found for discrete values of $\zeta \in [0,\pi]$ with a step size of $\pi/600$. We also evaluate the lower bound at the same values of $w$.
Fig.~\ref{fig:zetasim} shows the corresponding value of $\zeta$ for which the worst-case evacuation time is obtained for $\mathcal{A}_2$ and the four values of $\alpha$.
Observe that, for $\alpha> 1.30346$, $\zeta = \pi$ performs the best for algorithm $\mathcal{A}_2$ when crash time $w < 1 + 2\pi/3$. \vspace{-2em}
\begin{figure}[h]\centering
	\includegraphics[width=\linewidth]{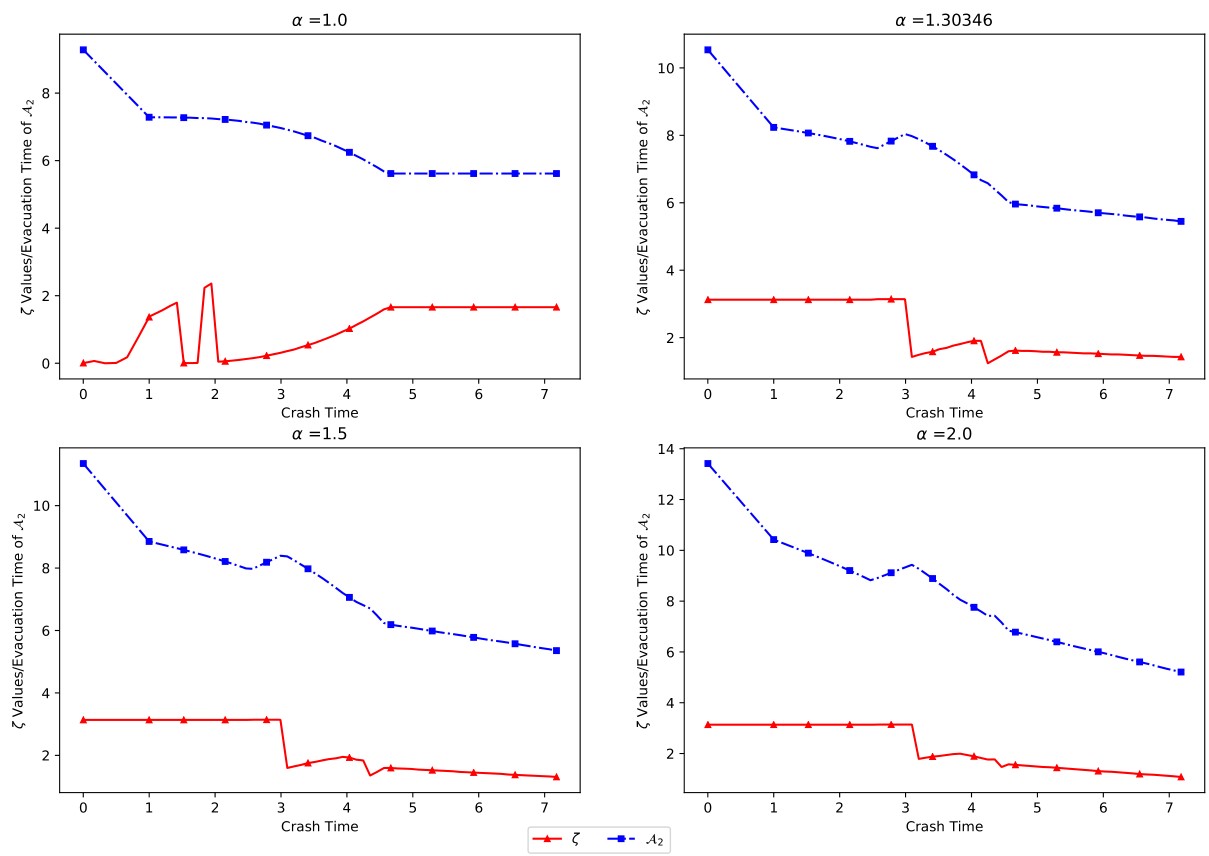}\vspace{-1em}
	\caption{Value of $\zeta$ with respect to the least worst-case evacuation time of $\mathcal{A}_2$}\label{fig:zetasim}
\end{figure}\vspace{-4em}
\section{Conclusion}\label{sec:conclusion}\vspace{-1em}
In this paper, we have introduced and analyzed evacuation algorithms for two robots out of which one can be faulty. 
Along with this, we also provide a lower bound for the evacuation time. For chauffeuring cost $\alpha =1$ the worst-case evacuation time is equal to the lower bound for crash time $w = 0$. This makes Algorithm $\mathcal{A}_0$ optimal. As the value of crash time $w$ increases beyond $1 + 2\pi/3$, the gap between the lower bound and evacuation time of  $\mathcal{A}_1$ is very small for all values of $\alpha$.
For $\alpha=1$, the lower bound is very close to the minimum evacuation time of $\mathcal{A}_1$ and $\mathcal{A}_2$, which says that the lower bound is tight. But the same does not happen for larger values of $\alpha$, where the lower bound is not very close.

This paper revisits the evacuation problem from a fault-tolerance aspect with one of the classical crash fault scenarios. The model can be further extended for generalized crash fault model with $k$ robots out of which $f$ are faulty. Further, it is interesting to design algorithms which can tighten the bounds presented in this paper.

\bibliographystyle{abbrv}
\bibliography{bib}

\appendix
\newpage
\section{Comparison between strategies in Wireless Communication}
Two robots start together from the center $O$ at the same time as shown in Fig.~\ref{fig:carrypick}. Let $R_1$ be the robot that crashes at $C$. $R_2$ is at the point $D$ at the same time. $R_2$ travels up to the point $F$ at a distance $y$ along the arc before it decides to pick up $R_1$.

\noindent\begin{minipage}{\linewidth}
\begin{minipage}{0.45\linewidth}
Then the worst-case evacuation time appears if the exit is at a small distance $\epsilon (> 0)$ from $F$ in the unexplored part. The time for evacuation is $ f = w + y + 2\sin(w-1 + y/2) + \alpha(2\pi - 2(w-1) - y)$.
The function is $f$ is monotonically increasing with respect to $y$ up to $w + y/2 = \arccos(\alpha -1)$ and then monotonically decreases. So if $w > \arccos(\alpha-1)$, then it is better to maximize $y$. And if $w <= \arccos(\alpha-1)$, then it is better to have $y = 0$.
\end{minipage}\hspace{0.05\linewidth}
    \begin{minipage}{0.45\linewidth}
        \begin{figure}[H]
    \centering \vspace{-2em}
    \includegraphics[width=0.7\linewidth]{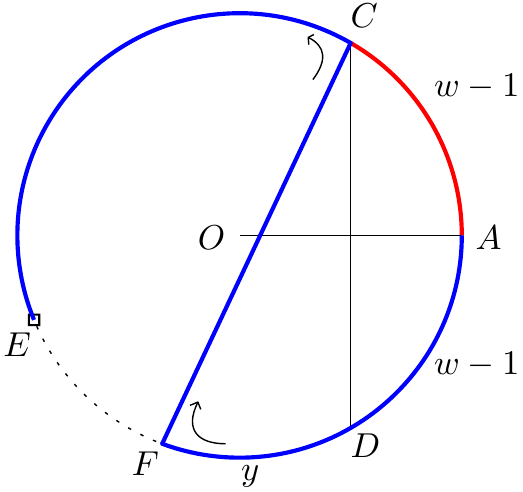}
    \caption{$R_2$ goes to pick $R_1$ at $F$ after travelling a distance $y$ and search along}
    \label{fig:carrypick}
\end{figure}
    \end{minipage}
\end{minipage}
\begin{remark}
    Going for meet-up immediately when the other robot fails is better compared to going for the crashed robot later if the exit is not found.
\end{remark}

\end{document}